\date{}
\title{On $P_5$-free Chordal Bipartite graphs}
\author{S.Aadhavan, P.Renjith, N.Sadagopan} 
\institute{Indian Institute of Information Technology Design and Manufacturing, Kancheepuram, Chennai. \\
\email{aadhav1395@gmail.com, \{coe14d002, sadagopan\}@iiitdm.ac.in}}
\begin{document}
\maketitle
\begin{abstract}
A bipartite graph is chordal bipartite if every cycle of length at least 6 has a chord in it.  In this paper, we investigate the structure of $P_5$-free chordal bipartite graphs and show that these graphs have a {\em Nested Neighborhood Ordering}, a special ordering among its vertices.  Further, using this ordering, we present polynomial-time algorithms for classical problems such as Hamiltonian cycle (path) and longest path.  Two variants of the Hamiltonian path problem include the Steiner path and the minimum leaf spanning tree problems, and we obtain polynomial-time algorithms for these problems as well restricted to $P_5$-free chordal bipartite graphs. \newline\newline
\textbf{Keywords:} $P_5$-free chordal bipartite graphs, Nested Neighborhood Ordering, Hamiltonian path
\end{abstract}
\section{Introduction}
The graphs with forbidden subgraphs possess a nice structural characterization.  The structural characterization of these graphs has attracted  researchers from both mathematics and computing. The popular graphs are chordal graphs\cite{dirac1961rigid}, which forbid induced cycles of length at least four, and chordal bipartite graphs\cite{fulkerson1965incidence}, which are bipartite graphs that forbid induced cycles of length at least six.  Further, these graph classes have a nice structural characterization with respect to minimal vertex separators and a special ordering, namely, perfect elimination ordering\cite{golumbie1980algorithmic} among its vertices (edges).  These graphs are largely studied in the literature to understand the computational complexity of classical optimization problems such as vertex cover, dominating set, coloring, etc., as these problems are known to be NP-complete in general graphs.  Thus, these graphs help to identify the gap between NP-completeness and polynomial-time solvable instances of a combinatorial problem.   \\
The Hamiltonian cycle (path) problem is a famous problem which asks for the presence of a cycle (path) that visits each node exactly once in a graph. Hamiltonian problems play a significant role in various research areas such as circuit design\cite{wang2012efficient}, operational research\cite{malakis1976hamiltonian}, biology\cite{dorninger1994hamiltonian}, etc.  On the complexity front, this problem is well-studied and it remains NP-complete on general graphs.  Interestingly, this problem is polynomial-time solvable on special graphs such as cographs and permutation graphs \cite{deogun1994polynomial}.  Surprisingly, this problem is NP-complete in chordal graphs\cite{bertossi1986hamiltonian}, bipartite graphs\cite{Krishnamoorthy:1975:NPB:990518.990521} and $P_5$ free graphs. \\
Haiko Muller\cite{muller1996hamiltonian}  has shown that the hamiltonian cycle problem is NP-complete in chordal bipartite graphs by a polynomial-time reduction from the satisfiablity problem .  The microscopic view of reduction instances reveals that the instances are $P_9$-free chordal bipartite graphs.  It is natural to study the complexity of hamiltonian cycle problem in  $P_8$-free chordal bipartite graphs and its subclasses.   Since $P_4$-free chordal bipartite graphs are complete bipartite graphs, the first non-trivial graph class in this line of research is $P_5$-free chordal bipartite graphs.   It is known from the literature that problems such as hamiltonian cycle, clique, clique cover, domination, etc.,  have polynomial-time algorithms in $P_4$-free graphs.  In recent times, the class $P_5$-free graphs have received a good attention, and  problems such as independent set\cite{Lokshantov:2014:ISP:2634074.2634117} and 3-colourbility\cite{randerath20013} have polynomial-time  algorithms restricted to $P_5$-free graphs. \\
{\bf Our work:} In this paper, we study the structure of $P_5$-free chordal bipartite graphs and present a new ordering referred to as {\em Nested Neighbourhood Ordering} among its vertices. We present a polynomial-time algorithm for hamiltonian cycle (path) problem using the Nested Neighbourhood Ordering.  Further, using this ordering, we present polynomial-time algorithms for longest path, minimum leaf spanning tree and Steiner path problems.
\subsection{Preliminaries}
All the graphs used in this paper are simple, connected and unweighted.  For a graph $G$, let $V(G)$ denote the vertex set and $E(G)$ denote the edge set.  The notation $uv$ represents an edge incident on the vertices $u$ and $v$.  The degree of a vertex $u$ is denoted as $d_G(u)=|N_G(u)|$, where $N_G(u)$  denotes the set of vertices that are adjacent to $u$.  If $G$ is disconnected, then $c(G)$ denotes the number of connected components in $G$ (each component being maximal).  A bipartite graph is chordal bipartite if every cycle of length six has a chord. A maximal biclique $K_{i,j}$ is a complete bipartite graph such that there is no strict supergraphs $k_{i+1,j}$ or $k_{i,j+1}$.   A maximum biclique is a maximal biclique with the property that $|i-j|$ is minimum. Note that $P_5$ is an induced path of length $5$ and $P_{uv}$ denotes a path that starts at $u$ and ends at $v$, and $|P_{uv}|$ denotes its length.   We use $P_{uv}$ and $P(u,v)$ interchangeably.
\section{Structural Results}
In this section, we shall present a structural characterization of $P_5$-free chordal bipartite graphs.  Also, we introduce {\em Nested Neighborhood Ordering} among its vertices.  We shall fix the following notation to present our results.  For a chordal bipartite $G$ with bipartition $(A,B)$, let $A=\{x_1,x_2,\ldots,x_m\}$ and $B=\{y_1,y_2,\ldots,y_n\}$. $E(G) \subseteq \{xy ~|~ x \in A,y \in B \}$.  We write $A=A_{1} \cup A_{2}$ and $B=B_{1} \cup B_{2}$,
$A_{1}=\{x_{1},x_{2},\ldots,x_{i}\}$, $A_{2}=\{x_{i+1},x_{i+2},\ldots,x_{m}\}$ and $B_{1}=\{y_1,y_2,\ldots,y_j\}$, $B_{2}=\{y_{j+1},y_{j+2},\ldots,y_{n}\}$, such that $(A_1,B_1)$ denotes the maximum biclique.  
\begin{lemma}
\label{lem1}
Let $G$ be a $P_5$-free chordal bipartite graph.  Then, $\forall x \in A_{2},\exists y_{k} \in B_{1}$ such that $y_{k} \not\in N_G(x)$ and $\forall y \in B_{2},\exists x_{k} \in A_{1}$ such that $x_{k} \not\in N_G(y)$.
\end{lemma}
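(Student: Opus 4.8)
The plan is to derive both assertions directly from the maximality that is built into the definition of the maximum biclique $(A_1,B_1)$, arguing by contradiction; notably, neither $P_5$-freeness nor chordal bipartiteness is actually needed, since the statement is really a property of maximal complete bipartite subgraphs in any bipartite graph.

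First I would handle the claim for $A_2$. I would fix an arbitrary $x \in A_2$ and suppose, for contradiction, that $x$ is adjacent to every vertex of $B_1$, that is, $B_1 \subseteq N_G(x)$. Since $(A_1,B_1)$ is a biclique, every vertex of $A_1$ is already adjacent to every vertex of $B_1$; appending $x$ to the $A$-side then produces a complete bipartite subgraph on vertex sets $A_1 \cup \{x\}$ and $B_1$, i.e.\ a copy of $K_{i+1,j}$. As $x \in A_2$ lies outside $A_1$, this strictly contains $(A_1,B_1)$ and so contradicts the hypothesis that $(A_1,B_1)$ is a maximal (hence maximum) biclique. Consequently no such $x$ can exist, and every $x \in A_2$ must miss at least one $y_k \in B_1$, which is the first assertion. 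I would then obtain the second assertion by the mirror-image argument with the roles of $A$ and $B$ interchanged: a vertex $y \in B_2$ adjacent to all of $A_1$ would make $(A_1, B_1 \cup \{y\})$ a copy of $K_{i,j+1}$ strictly containing the maximum biclique, again a contradiction.

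The only delicate point — and the closest thing to an obstacle in an otherwise short argument — is confirming that appending a single fully-adjacent vertex to one side genuinely preserves completeness, so that the enlarged subgraph really is a biclique of the claimed type $K_{i+1,j}$ or $K_{i,j+1}$; this is immediate from the definition of a vertex being adjacent to the entire opposite part, combined with the fact that $(A_1,B_1)$ was already complete. Finally, if $A_2$ (respectively $B_2$) is empty, the corresponding assertion is vacuously true, so no separate base case is required.
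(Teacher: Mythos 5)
Your argument is correct and is essentially the paper's own proof: both assume some $x \in A_2$ is adjacent to all of $B_1$, observe that $(A_1 \cup \{x\}, B_1)$ would then be a strictly larger biclique contradicting the maximality of $(A_1,B_1)$, and handle $B_2$ symmetrically. Your added remarks (that $P_5$-freeness is not needed and that the empty case is vacuous) are accurate but do not change the approach.
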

\begin{proof}
Suppose there exists $x$ in $A_2$ such that for all $y_k$ in $B_1$, $xy_k \in E(G)$.  Then, $(A_1 \cup \{x\},B_1)$ is the maximum clique, contradicting the fact that $(A_1,B_1)$ is maximum.  Similar argument is true for $y \in B_2$.  Therefore, the lemma. \qed 
\end{proof}
\begin{lemma}
\label{lem2}
Let $G$ be a $P_5$-free chordal bipartite graph.  Then,  $\forall x \in A_{2}$, $N_G(x) \subset B_{1}$ and $ \forall y \in B_{2}$, $N_G(y) \subset A_{1}$. 
\end{lemma}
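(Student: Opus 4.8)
My plan rests on noticing that the two conclusions are really a single statement. Since $G$ is bipartite, $N_G(x)\subseteq B$ for every $x\in A_2$, so ``$N_G(x)\subset B_1$'' merely says that $x$ has no neighbour in $B_2$; symmetrically ``$N_G(y)\subset A_1$'' says that $y\in B_2$ has no neighbour in $A_2$. Both assert that \emph{there is no edge between $A_2$ and $B_2$}, and the properness of the inclusions then comes for free from Lemma~\ref{lem1}. So I would argue by contradiction, assuming an edge $xy$ with $x\in A_2$ and $y\in B_2$, and aim to produce an induced $P_5$.

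The engine of the argument is one $P_5$-gadget built from the biclique. Suppose a vertex $a\in A_2$ is adjacent both to some $b\in B_2$ and to some $c\in B_1$. Using Lemma~\ref{lem1} I would pick $c'\in B_1\setminus N_G(a)$ and $a'\in A_1\setminus N_G(b)$; then completeness of $(A_1,B_1)$ gives the edges $ca'$ and $a'c'$, while the bipartition kills all $A$-$A$ and $B$-$B$ pairs. Checking the remaining non-adjacencies $bc,ba',bc',ac'$ shows that $b,a,c,a',c'$ induce a $P_5$ in this order, a contradiction. A mirror gadget, swapping the roles of the two sides, rules out the situation in which $b\in B_2$ is adjacent to some $a\in A_2$ and also to some vertex of $A_1$.

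Feeding the two gadgets the offending edge $xy$, I may henceforth assume that $x$ has no neighbour in $B_1$ and that $y$ has no neighbour in $A_1$; this is exactly the regime the gadgets cannot touch, and it is the main obstacle. Here I would bring in connectivity together with $P_5$-freeness. A shortest path $P$ from $x$ to the biclique $A_1\cup B_1$ is necessarily induced, hence has at most four vertices; and since $x\in A$ has no neighbour in the biclique its length is $2$ or $3$. If $P=x,u,w$, tracing the bipartition forces $u\in B_2$ and $w\in A_1$, and then $y,x,u,w,y_k$ (for any $y_k\in B_1$, with $wy_k\in E$ by biclique completeness) induces a $P_5$. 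If $P=x,u,v,w$, then $v\in A_2$ is adjacent to $u\in B_2$ and to $w\in B_1$, which is precisely the hypothesis of the first gadget applied to the pair $(v,u)$, giving a contradiction once more.

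This exhausts the cases, so no edge joins $A_2$ to $B_2$, which is the lemma. I expect the bulk of the work to be the purely routine verification of the non-adjacencies in each five-vertex set; the only genuinely delicate point is the reduction in the last step, where both connectivity and the four-vertex bound forced by $P_5$-freeness are indispensable.
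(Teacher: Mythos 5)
Your proof is correct, and its two $P_5$ gadgets are exactly the configurations the paper itself uses: your first gadget is the paper's Case~3 path $(y_c,x_a,y_b,x_k,y_k)$ and your mirror gadget is its Case~2 path $(x_a,y_b,x_c,y_k,x_k)$, with the required non-neighbours supplied by Lemma~\ref{lem1} and the edges by completeness of $(A_1,B_1)$, just as you describe. Where you genuinely diverge is in the case decomposition and, crucially, in the residual case where the offending edge $xy$ ($x\in A_2$, $y\in B_2$) has neither endpoint adjacent to the biclique. The paper's Case~2 simply asserts that ``since $G$ is connected, $N(y_b)\cap A_1\neq\emptyset$,'' which does not follow from connectivity alone --- $y_b$ could a priori reach $A_1$ only through further $A_2$/$B_2$ vertices --- and is uncomfortably close to assuming the half of the lemma being proved. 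Your shortest-path argument closes exactly this hole: a shortest path from $x$ to $A_1\cup B_1$ is induced, hence has at most four vertices by $P_5$-freeness, its internal vertices avoid $A_1\cup B_1$ by minimality, and the length-$2$ and length-$3$ instantiations land back in one of the two gadgets. So your version is slightly longer but self-contained, whereas the paper's is shorter at the cost of an unjustified step. One microscopic point to add in a full write-up: in your length-$2$ case you should note $u\neq y$, which holds because $u$ has a neighbour in $A_1$ while, by your standing reduction, $y$ does not.
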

\begin{proof}
On the contrary,  $\exists x_{a} \in A_{2}$, $N(x_{a}) \not\subset B_{1}$. Case 1: $N(x_{a}){=}B_{1}$. Then, $(A_{1} \cup \{x_{a}\}, B_{1})$ is the  maximum biclique, a contradiction.  Case 2: $N(x_{a}) \subseteq B_{2}$.  This implies that there exists $y_{b} \in B_2$ such that $y_b \in N(x_{a})$.  Since $G$ is connected, $N(y_{b}) \cap A_{1} \neq \emptyset$, say $x_{c} \in N(y_{b}) \cap A_{1}$.  In $G$, $P(x_{a},x_{k})=(x_{a},y_{b},x_{c},y_{k},x_{k})$ is an induced $P_{5}$.  Note that, due to the maximality of $(A_1,B_1)$, as per Lemma \ref{lem1}, we find $x_{k} \not\in N(y_{b}), y_{k} \not\in N(x_{a})$.  This  contradicts that $G$ is $P_5$-free.  Case 3: $N(x_{a}) \cap B_{1} \neq \emptyset$ and $N(x_{a}) \cap B_{2} \neq \emptyset$. I.e., $ \exists x_{a} \in A_{2}$ such that $y_{b},y_{c} \in N(x_{a})$ and $y_{b} \in B_{1}$, $y_{c} \in B_{2}$.  In $G$, $P(y_{c},y_{k})=(y_{c},x_{a},y_{b},x_{k},y_{k}$), $x_{k} \not\in N(y_{c})$, $y_{k} \not\in N(x_{a})$ is an induced $P_{5}$.  Note that the existence of $x_k,y_k$ is due to Lemma \ref{lem1}.  This is contradicting the $P_5$-freeness of $G$.  Similarly, $ \forall y \in B_{2}$, $N(y) \subset A_{1}$ can be proved. \qed
\end{proof}
\begin{lemma}
\label{lem3}
Let $G$ be a $P_5$-free chordal bipartite graph.  
For $x_i,x_j \in A_{2}$, $i \not= j$, if $d_G(x_i) \leq d_G(x_j)$, then $N(x_i) \subseteq N(x_j)$.   Similarly, for $y_i,y_j \in B_{2}$, if $d_G(y_i) \leq d_G(y_j)$, $N(y_i) \subseteq N(y_j)$.
\end{lemma}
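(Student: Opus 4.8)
The plan is to argue by contradiction, exploiting $P_5$-freeness together with the biclique structure already extracted in Lemma \ref{lem2}. First I would record the crucial consequence of Lemma \ref{lem2}: since $x_i,x_j\in A_2$, both neighborhoods $N(x_i)$ and $N(x_j)$ lie entirely inside $B_1$. This confines the whole argument to the complete bipartite core $(A_1,B_1)$, which is exactly the structure I want to use. I would also note at the outset that $B_1\neq\emptyset$ (it contains the neighbors of $x_i$) and hence $A_1\neq\emptyset$, because $(A_1,B_1)$ is a biclique carrying edges; this guarantees a witness vertex $x_k\in A_1$ is available later.

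Next, assume for contradiction that $N(x_i)\not\subseteq N(x_j)$. Then there is some $y_a\in N(x_i)\setminus N(x_j)$. The point of the hypothesis $d_G(x_i)\le d_G(x_j)$ is to force the neighborhoods to genuinely \emph{cross} rather than merely fail to nest: writing $t=|N(x_i)\cap N(x_j)|$, we have $|N(x_i)\setminus N(x_j)|=d_G(x_i)-t\ge 1$ and $|N(x_j)\setminus N(x_i)|=d_G(x_j)-t\ge d_G(x_i)-t\ge 1$, so there must also exist $y_b\in N(x_j)\setminus N(x_i)$. By construction $y_a\neq y_b$, and both lie in $B_1$.

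With the crossing pair $y_a,y_b\in B_1$ in hand, I would pick any $x_k\in A_1$; by the biclique property $x_k$ is adjacent to both $y_a$ and $y_b$. The finishing move is to verify that the five vertices $x_i,y_a,x_k,y_b,x_j$ induce a $P_5$. The edges $x_iy_a$, $y_ax_k$, $x_ky_b$, $y_bx_j$ are all present, while the non-edges needed for inducedness come for free: $x_ix_k,x_ix_j,x_kx_j,y_ay_b$ are absent by bipartiteness, and $x_iy_b$, $x_jy_a$ are absent precisely because $y_b\notin N(x_i)$ and $y_a\notin N(x_j)$. Distinctness of all five vertices follows from $x_i\neq x_j$ (hypothesis), $x_k\in A_1$ disjoint from $A_2$, and $y_a\neq y_b$. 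This contradicts $P_5$-freeness, proving $N(x_i)\subseteq N(x_j)$. The statement for $y_i,y_j\in B_2$ follows by the symmetric argument with the roles of $A_1$ and $B_1$ interchanged. I expect the only place demanding care is the degree-counting step that upgrades a single violating vertex $y_a$ into a genuine crossing pair $\{y_a,y_b\}$; once both sides of the cross are secured, the induced $P_5$ assembles routinely.
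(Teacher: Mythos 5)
Your proof is correct, but the construction differs from the paper's in a way worth noting. The paper does not route the $P_5$ through $A_1$: it splits into cases according to whether $N(x_i)\cap N(x_j)$ is empty. When a common neighbour $y_a$ exists, it uses $y_a$ as the middle vertex, producing the induced path $(y_c,x_j,y_a,x_i,y_b)$ with $y_b\in N(x_i)\setminus N(x_j)$ and $y_c\in N(x_j)\setminus N(x_i)$ (the latter obtained from the degree hypothesis exactly as in your counting step); when the neighbourhoods are disjoint, it falls back on connectivity to join $y_a$ to $y_b$ by an induced path of length at least $3$ and extracts a long induced path from the concatenation. Your version instead invokes Lemma \ref{lem2} to place both neighbourhoods inside $B_1$ and then uses any $x_k\in A_1$ as the universal middle vertex, which collapses the two cases into one: the five vertices $x_i,y_a,x_k,y_b,x_j$ form an induced $P_5$ regardless of whether $N(x_i)$ and $N(x_j)$ intersect, and all the required non-edges are either bipartite non-edges or the defining non-adjacencies $y_a\notin N(x_j)$, $y_b\notin N(x_i)$. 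What your approach buys is the elimination of the paper's second case, whose concatenation argument is the more delicate step (one must worry about chords from $x_i$ or $x_j$ into the interior of the connecting path); the price is a dependence on Lemma \ref{lem2} and on $A_1\neq\emptyset$, both of which you correctly discharge. Both proofs share the same engine — the degree inequality upgrades a single witness of non-containment into a crossing pair, which then yields a forbidden $P_5$.
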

\begin{proof}
Let us assume to the contrary that $N(x_i) \not\subseteq N(x_j)$.  I.e., $N(x_i)  \setminus  N(x_j) \neq \emptyset$. \newline \textbf{Case 1:} $N(x_i) \cap N(x_j)\neq \emptyset$. I.e., $\exists y_{a}\in B$ such that  $y_{a} \in N(x_i) \cap N(x_j)$. Since $N(x_i) \not\subseteq N(x_j)$, $\exists y_{b}\in B$ such that  $y_{b} \not\in N(x_i) \cap N(x_j)$ and $y_{b} \in N(x_i)$. Since $d_G(x_j) \geq d_G(x_i)$, vertex $x_j$ is adjacent to at least one more vertex $y_{c} \in B$ such that $y_{c} \not\in N(x_i)$. The path $P(y_{c},y_{b})=(y_{c},x_j,y_{a},x_i,y_{b}$) is an induced $P_{5}$. This is a contradiction. \newline \textbf{Case 2:} $N(x_i) \cap N(x_j)= \emptyset $, $ | N(x_i) | \geq 1$ and $ | N(x_j) | \geq 1$. I.e., $\exists y_{a},y_{b}\in$B such that  $y_{a} \in N(x_i)$, $y_{a} \not\in N(x_j)$ and $y_{b} \in N(x_j)$, $y_{b} \not\in N(x_i)$.  Since $G$ is a connected graph,  $P^{'}(y_{a},y_{b}$)$\geq 3$, an induced path of length at least 3.  The path $P(x_i,x_j)=(x_i,P^{'}(y_{a},y_{b}),x_j$) has an  induced $P_{h}$,  $h \geq 5$. This is a contradiction.  Similarly, for all pairs of distinctive vertices $y_i,y_j \in B_{2}$ with $d_G(y_i) \leq d_G(y_j)$, $N(y_i) \subseteq N(y_j)$ can be proved.   \qed
\end{proof}
\begin{theorem}
\label{thm1}
Let $G$ be a $P_5$-free chordal bipartite graphs with $(A_1,B_1)$ being the maximum biclique.  Let $A_{2}=(u_{1},u_{2},...,u_{p})$ and $B_{2}=(v_{1},v_{2},...,v_{q})$ are orderings of vertices. If $d_{G}(u_1) \leq d_{G}(u_2) \leq d_{G}(u_3) \leq \ldots \leq d_{G}(u_p)$, then $N(u_{1}) \subseteq N(u_{2}) \subseteq N(u_{3}) \subseteq \ldots \subseteq N(u_{p})$.  Further, if  $d_{G}(v_1) \leq d_{G}(v_2) \leq d_{G}(v_3) \leq \ldots \leq d_{G}(v_q)$, then $N(v_{1}) \subseteq N(v_{2}) \subseteq N(v_{3}) \subseteq \ldots \subseteq N(v_{q})$.
\end{theorem}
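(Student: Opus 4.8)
The plan is to obtain Theorem~\ref{thm1} as an immediate consequence of Lemma~\ref{lem3}. Lemma~\ref{lem3} already supplies the pairwise neighbourhood containment for any two distinct vertices of $A_2$ (respectively $B_2$) whenever they are compared by degree; all that remains is to chain these pairwise inclusions along the full degree-sorted ordering. So the theorem is really just a repackaging of the pairwise statement into a global ordering, and I would present it as a short corollary.

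First I would treat the $A_2$ part, the $B_2$ part being entirely symmetric. Fix the ordering $A_2 = (u_1, u_2, \ldots, u_p)$ with $d_G(u_1) \leq d_G(u_2) \leq \cdots \leq d_G(u_p)$. For each index $k \in \{1, \ldots, p-1\}$, the two distinct vertices $u_k, u_{k+1} \in A_2$ satisfy $d_G(u_k) \leq d_G(u_{k+1})$, so Lemma~\ref{lem3} applies directly and yields $N(u_k) \subseteq N(u_{k+1})$. I would then assemble these $p-1$ consecutive inclusions into the single chain $N(u_1) \subseteq N(u_2) \subseteq \cdots \subseteq N(u_p)$, which follows by a trivial induction on $p$, or equivalently by transitivity of set inclusion. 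The identical argument applied to $B_2 = (v_1, \ldots, v_q)$ with $d_G(v_1) \leq \cdots \leq d_G(v_q)$ gives $N(v_1) \subseteq \cdots \subseteq N(v_q)$, completing the proof.

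There is no substantive obstacle here, since the entire content of the theorem is carried by Lemma~\ref{lem3}. The only point I would check carefully is the behaviour under ties in degree: if $d_G(u_k) = d_G(u_{k+1})$, then Lemma~\ref{lem3} applies in both directions and forces $N(u_k) = N(u_{k+1})$, which is consistent with the claimed inclusion and confirms that the chain is well defined regardless of how equal-degree vertices are broken in the ordering. Thus the proof reduces to invoking Lemma~\ref{lem3} on consecutive pairs and chaining, with the tie case noted for completeness.
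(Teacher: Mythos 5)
Your proof is correct and takes essentially the same approach as the paper's: both obtain the chain by applying Lemma~\ref{lem3} to consecutive pairs in the degree-sorted ordering and combining the resulting inclusions, the paper merely phrasing this as an explicit induction on $|A_2|$. Your additional remark on equal-degree ties is consistent and adds nothing problematic.
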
 
\begin{proof} 
We shall prove by mathematical induction on $|A_2|$. Base Case: $|A_{2}|=2, A_2=(u_{1},u_{2})$ such that $d_G(u_{1}) \leq d_G(u_2)$.  By Lemma \ref{lem3}, $N(u_1) \subseteq N(u_2)$.  Induction step: Consider $A_2=(u_{1},u_{2},u_{3},\ldots,u_{p}), p \geq 3$.  Consider the vertex $u_{p} \in A_{2}$ such that $d_G(u_p) \geq d_G(u_{p-1})$.  By Lemma \ref{lem3}, $N(u_{p-1}) \subseteq N(u_{p})$ is true.  By the hypothesis, $N(u_{1}) \subseteq N(u_{2}) \subseteq N(u_{3}) \subseteq \ldots \subseteq N(u_{p-1})$.   By combining the hypothesis and the fact that $N(u_{p-1}) \subseteq N(u_{p})$, our claim follows.  Similarly for $B_{2}$ as well.\qed
\end{proof}
We refer to the above ordering of vertices as {\em Nested Neighbourhood Ordering (NNO)} of $G$.  From now on, we shall arrange the vertices in $A_2$ in non-decreasing order of their degrees so that we can work with NNO of $G$.
\section{Hamiltonicity in $P_{5}$-free Chordal Bipartite graphs}
In this section, we shall present polynomial-time algorithms for hamiltonian cycle and path problems in $P_5$-free chordal bipartite graphs.   For a connected graph $G$ and set $S \subset V(G)$, $c(G-S)$ denotes the number of connected components in the graph induced on the set $V(G) \setminus S$.  It is well-known, due to, Chvatal \cite{dbwest2003} that if a graph $G$ has a hamiltonian cycle, then for every $S\subset V(G), c(G-S) \leq |S|$.  
\begin{theorem}
\label{thm2}
Let $G$ be a $P_5$-free chordal bipartite graph.  $G$ has a  hamiltonian cycle if and only if $|A| = |B|$ and $A_{2}$ has an ordering $(u_{1},u_{2},\ldots,u_{p})$, such that $\forall u_g, d_G(u_g) > g$, $1 \leq g \leq p$ and $B_{2}$ has an ordering $(v_{1},v_{2},\ldots,v_{q})$, $ \forall v_h, d_G(v_h) >h$, $1 \leq h \leq q$.
\end{theorem}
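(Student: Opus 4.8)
The plan is to prove necessity via Chv\'atal's separator condition and sufficiency by an explicit construction driven by the Nested Neighbourhood Ordering. For necessity, a Hamiltonian cycle of a bipartite graph alternates between $A$ and $B$, which gives $|A|=|B|$ at once. For the degree bound I would take the NNO ordering $(u_1,\dots,u_p)$ of $A_2$ (non-decreasing degree), so by Theorem~\ref{thm1} the neighbourhoods nest, and for each index $g$ I would set $S=N(u_g)\subseteq B_1$. Since $N(u_i)\subseteq N(u_g)=S$ for every $i\le g$ and these neighbourhoods lie entirely in $B_1$ by Lemma~\ref{lem2}, the vertices $u_1,\dots,u_g$ become isolated in $G-S$, while $A_1\cup(B_1\setminus S)$ is a nonempty further component because $S\subsetneq B_1$ by Lemma~\ref{lem1}. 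Hence $c(G-S)\ge g+1$, and Chv\'atal's inequality $c(G-S)\le|S|$ forces $d_G(u_g)=|S|\ge g+1>g$; the $B_2$ case is symmetric, so the NNO orderings are exactly the witnesses the statement asks for.

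For sufficiency I would first observe that ``some ordering of $A_2$ satisfies $d_G(u_g)>g$'' is equivalent to ``the degree-sorted ordering satisfies $d_G(u_g)>g$'': if the sorted sequence failed at index $g$, then $g$ vertices of degree $\le g$ would exist, and in any ordering one of them must occupy a position $\ge g$, breaking the bound. Thus I may assume $(u_1,\dots,u_p)$ and $(v_1,\dots,v_q)$ are the NNO orderings with the degree bounds intact, and then relabel $B_1=(\beta_1,\dots,\beta_b)$ and $A_1=(\alpha_1,\dots,\alpha_a)$ so that $N(u_g)=\{\beta_1,\dots,\beta_{d_G(u_g)}\}$ and $N(v_h)=\{\alpha_1,\dots,\alpha_{d_G(v_h)}\}$ are prefixes, which the nesting of neighbourhoods permits.

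The construction then threads $A_2$ through a prefix of $B_1$ as the path $P_A=\beta_1,u_1,\beta_2,u_2,\dots,\beta_p,u_p,\beta_{p+1}$ and $B_2$ through a prefix of $A_1$ as $P_B=\alpha_1,v_1,\alpha_2,\dots,\alpha_q,v_q,\alpha_{q+1}$; the edge $u_g\beta_{g+1}$ exists precisely because $d_G(u_g)>g$ (and symmetrically for $v_h\alpha_{h+1}$), while Lemma~\ref{lem1} keeps the prefixes proper so that $p+1\le b$ and $q+1\le a$. Using that $(A_1,B_1)$ is a complete biclique, I would close everything into one cycle by joining $\beta_{p+1}$ to $\alpha_1$, appending $P_B$, and weaving the leftover vertices $\alpha_{q+2},\dots,\alpha_a$ and $\beta_{p+2},\dots,\beta_b$ alternately before returning to $\beta_1$. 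The crux, and the step I would handle most carefully, is verifying that this stitching yields a single cycle that alternates perfectly between $A$ and $B$ and uses every vertex exactly once: the leftover arc alternates and closes exactly when the numbers of unused $A_1$ and $B_1$ vertices coincide, i.e.\ $a-q-1=b-p-1$, which is precisely the hypothesis $|A|=|B|$. This is where the cardinality condition does its essential work. The degenerate cases $p=0$ or $q=0$ collapse $P_A$ or $P_B$ to a single biclique vertex and are checked directly, with balanced complete bipartite graphs serving as the base case.
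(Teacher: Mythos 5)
Your proposal is correct and follows essentially the same route as the paper: necessity via Chv\'atal's condition applied to $S=N(u_g)$ (using the nesting of neighbourhoods from Theorem~\ref{thm1} to isolate $u_1,\ldots,u_g$), and sufficiency via the explicit alternating cycle that threads $A_2$ through a prefix of $B_1$, $B_2$ through a prefix of $A_1$, and weaves the leftover biclique vertices, with $|A|=|B|$ closing the count. Your added remarks (that ``some ordering'' reduces to the degree-sorted one, and the explicit treatment of the degenerate cases) are minor refinements of the paper's argument rather than a different approach.
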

\begin{proof}
{\em Necessity:} On the contrary, $ \exists u_g \in A_{2}$  such that $u_g$ is the first vertex in the ordering with $d_G(u_g) \leq g$.  That is, for $u_k \in \{u_1,\ldots,u_{g-1}\}$, $d_{G}(u_k) > k$ and $d_G(u_g) \leq g$.   Let $d_{G}(u_g)= s$.  From Theorem \ref{thm1},  we know that $N(u_1) \subseteq N(u_2) \subseteq \ldots \subseteq N(u_{g-1}) \subseteq N(u_g)$.   This implies that $c(G-N(u_g)) = s+1 >s$.  This is a contradiction to Chvatal's necessary condition for hamiltonian cycle. Similarly, in $B_{2}$, $ \forall v_h,$ $d_G(v_h)>h$ can be proved. \\
{\em Sufficiency:} Let $i=|A_1|$ and $j=|B_1|$. Since $A_{2}$ has an ordering such that $ \forall u_g \in A_2$, $d_G(u_g)>g$, for clarity purpose, we define $N_G(u_g)$ as follows; $N(u_g)=\{y_{1},y_{2},\ldots,y_{l}\}$, $g<l<j$,  that is, $u_1$ is adjacent to at least two vertices $\{y_{1},y_{2}\}$ and at most $j-1$ vertices $\{y_{1},y_{2},\ldots,y_{j-1}\}$, $u_2$ is adjacent to at least three vertices $\{y_{1},y_{2},y_{3}\}$ and at most $j-1$ vertices $\{y_{1},y_{2},\ldots,y_{j-1}\}$ and similarly $u_p$ is adjacent to at least $p+1$ vertices $\{y_{1},y_{2},\ldots,y_{p+1}\}$ and at most $j-1$ vertices $\{y_{1},y_{2},\ldots,y_{j-1}\}$.  Observe that, due to the maximality of $(A_1,B_1)$, any $u_g$ of $A_2$ can be adjacent to at most $j-1$ vertices of $B_1$.  Similarly, in $B_2$, for all $v_h \in B_2$,               $N(v_h)=\{x_{1},x_{2},\ldots,x_{l}\},$ $h<l<i$. \\
Let $d_G(u_p)=r, p+1 \leq r \leq j-1$ and $d_G(v_q)=s, q+1 \leq s \leq i-1$. The vertices in $A_{1}$ can be ordered as $(x_{1},x_{2},\ldots,x_{q},x_{q+1},\ldots,x_{i})$ and the vertices in $B_{1}$ can be ordered as $(y_{1},y_{2},\ldots,y_{p},y_{p+1},\ldots,y_{j})$.  
Note that $A_{3}=A_{1}{\setminus}\{x_{1},x_{2},\ldots,x_{q+1}\}=\{x_{q+2},\ldots,x_{i-1},x_{i}\}$ and $B_{3}=B_{1}{\setminus}\{y_{1},y_{2},\ldots,y_{p+1}\}=\{y_{p+2},\ldots,y_{j-1},y_{j}\}$. 
Further, $|A_{3}| = | A | -(| A_{2} | +q+1)= | A | -(p+q+1)$ and $ | B_{3} | = | B | -( | B_{2} | +p+1)= | B | -(q+p+1)$. 
Since $ | A | = | B | $, it follows that $ | A_{3} | = | B_{3} | $. 
In $G$, \\ $(y_{1},u_{1},y_{2},u_{2},\ldots,y_{p},u_{p},y_{p+1},x_{1},v_{1},x_{2},v_{2},\ldots,x_{q},v_{q},x_{q+1},y_{p+2},x_{q+2},\ldots,y_{j},x_{i},y_{1})$ is a hamiltonian cycle. 
\end{proof}
The following lemma is well known and is due to Chvatal\cite{dbwest2003}. 
\begin{lemma}Chvatal\cite{dbwest2003}
If a graph $G$ has a Hamiltonian path, then for every $S\subset V(G), c(G-S) \leq | S|+1$.
\end{lemma}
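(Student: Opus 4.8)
The plan is to follow Chvatal's standard argument in its cleanest form: exploit the fact that a Hamiltonian path is a \emph{spanning} subgraph of $G$ whose components, after deleting $S$, are trivial to count, and then transfer that bound to $G-S$ by observing that $G-S$ contains $P-S$ as a spanning subgraph, so adding edges can only merge components.

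First I would let $P=(v_1,v_2,\ldots,v_n)$ be a Hamiltonian path of $G$. Since $P$ visits every vertex of $G$ exactly once, we have $V(P)=V(G)$, so $P$ is a spanning subgraph of $G$ (it uses all of $V(G)$ and a subset of $E(G)$). Fix an arbitrary $S\subset V(G)$ and write $|S|=k$.

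The key step is an elementary count on the path alone. Deleting the $k$ vertices of $S$ from $P$ decomposes $P$ into its maximal runs of consecutive vertices lying in $V(G)\setminus S$; each such run is itself a subpath and hence connected, and each deleted vertex contributes at most one additional break to the path. Therefore the number of these runs is at most $k+1$, i.e. $P-S$ is a disjoint union of at most $|S|+1$ paths, giving $c(P-S)\le |S|+1$.

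Finally I would transfer the bound to $G-S$. Because $P$ is a spanning subgraph of $G$, the graph $P-S$ is a spanning subgraph of $G-S$: they share the vertex set $V(G)\setminus S$, and $E(P-S)\subseteq E(G-S)$. Reinstating the remaining edges of $G-S$ can only merge connected components, never split them, so $c(G-S)\le c(P-S)\le |S|+1$, which is exactly the claimed inequality. I expect no real obstacle here; the only point needing care is the elementary claim that removing $k$ vertices from a path leaves at most $k+1$ components, which follows either by a one-line induction on $k$ or by directly counting the maximal $S$-free runs of $P$.
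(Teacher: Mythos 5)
Your proof is correct: the count that deleting $k$ vertices from a path leaves at most $k+1$ subpaths, combined with the observation that $P-S$ is a spanning subgraph of $G-S$ so that restoring edges can only merge components, is exactly the standard argument for Chv\'atal's necessary condition. The paper itself gives no proof of this lemma --- it simply states it as a well-known fact with a citation to West --- so there is nothing to compare against; your argument fills that gap correctly and is the canonical one.
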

\begin{theorem}
\label{thm3}
Let $G$ be a $P_5$-free chordal bipartite graph.  $G$ has a  hamiltonian path  if and only if one of the following is true \\
(i)  $|A|=|B|$ and $A_2$ has an ordering,  $\forall u_g, d_G(u_g) \geq g$, $1 \leq g \leq p$ and $B_{2}$ has an ordering, $ \forall v_h, d_G(v_h) \geq h$, $1 \leq h \leq q$. \\
(ii) $|A|=|B|+1$ and $A_2$ has an ordering,$\forall u_g, d_G(u_g) \geq g$, $1 \leq g \leq p$ and $B_{2}$ has an ordering, $ \forall v_h, d_G(v_h) >h$, $1 \leq h \leq q$. \\
\end{theorem}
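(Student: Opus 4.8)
The plan is to mirror the structure of the proof of Theorem~\ref{thm2}, splitting into a necessity and a sufficiency direction and, within each, treating the two cardinality regimes $|A|=|B|$ and $|A|=|B|+1$ separately. Throughout I would invoke the Nested Neighbourhood Ordering of Theorem~\ref{thm1} and relabel $B_1$ (resp.\ $A_1$) so that each $N(u_g)$ is an initial segment $\{y_1,\ldots,y_{d_G(u_g)}\}$ of $B_1$ and each $N(v_h)$ an initial segment of $A_1$; this is legitimate because, by Lemma~\ref{lem2}, these neighbourhoods sit inside $B_1$ (resp.\ $A_1$) and, by Theorem~\ref{thm1}, they are nested.

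For \emph{necessity}, I would first observe that a Hamiltonian path in a bipartite graph forces the two sides to differ in size by at most one, so (taking $|A|\ge|B|$ without loss of generality) exactly one of $|A|=|B|$ or $|A|=|B|+1$ holds; this yields the case split. The degree inequalities with $\ge$ then follow from Chvátal's path condition $c(G-S)\le|S|+1$ exactly as in Theorem~\ref{thm2}: deleting $S=N(u_g)$ isolates the $g$ vertices $u_1,\ldots,u_g$, since their neighbourhoods are nested inside $N(u_g)$, so $c(G-S)\ge g+1$, forcing $g+1\le d_G(u_g)+1$, i.e.\ $d_G(u_g)\ge g$; symmetrically $d_G(v_h)\ge h$. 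This already establishes case~(i) entirely and the $A$-side of case~(ii).

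The delicate point, and the step I expect to be the main obstacle, is upgrading $d_G(v_h)\ge h$ to the strict $d_G(v_h)>h$ in case~(ii), since the plain Chvátal bound only gives $c(G-N(v_h))\ge h+1\le|N(v_h)|+1$, which is consistent with equality and produces no contradiction. Here I would exploit that when $|A|=|B|+1$ \emph{both} endpoints of any Hamiltonian path lie in $A$, so every vertex of $B$ is internal. Writing the path as $a_0\,b_1\,a_1\cdots b_k\,a_k$ with all $a_t\in A$ and deleting $S\subseteq A$, a vertex $b_t\in B$ can become isolated along the path only when both of its path-neighbours $a_{t-1},a_t$ lie in $S$; counting the maximal runs of consecutive deleted vertices in the sequence $a_0,\ldots,a_k$ shows that at most $|S|-R$ vertices of $B$ are so isolated, where $R\ge 1$ is the number of runs and $R\ge1$ because $S=N(v_h)\ne\emptyset$. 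Applying this to $S=N(v_h)$, which isolates $v_1,\ldots,v_h$ (all in $B$), gives $h\le|N(v_h)|-R\le d_G(v_h)-1$, i.e.\ $d_G(v_h)>h$.

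For \emph{sufficiency}, I would exhibit an explicit Hamiltonian path in each case by opening the closed walk of Theorem~\ref{thm2} into a path. Using $d_G(u_g)\ge g$ together with the initial-segment form of the neighbourhoods, the vertices of $A_2$ can be threaded with those of $B_1$ as $u_1,y_1,u_2,y_2,\ldots,u_p,y_p$, each required edge $u_g y_{g-1}$ and $u_g y_g$ being guaranteed by the bound $d_G(u_g)\ge g$ and nesting; this segment starts at the free endpoint $u_1\in A$. The $A_1$--$B_1$ biclique is then traversed freely and $B_2$ threaded with $A_1$ using $d_G(v_h)\ge h$ (resp.\ $>h$), reusing the balancing counts $|A_3|=|B_3|$ of Theorem~\ref{thm2}. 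When $|A|=|B|$ the alternation forces one endpoint in each part, matching case~(i), whereas in case~(ii) the single surplus vertex of $A$ supplies the second $A$-endpoint; the relaxation of the cycle's strict inequalities to $\ge$ on exactly the side carrying an endpoint is precisely what opening the cycle into a path permits. The remaining work is the routine verification, via Lemmas~\ref{lem1} and~\ref{lem2}, that the listed sequence visits every vertex once and that consecutive vertices are adjacent.
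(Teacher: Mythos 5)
Your proposal is correct, and most of it coincides with the paper's own proof: the sufficiency direction uses the same explicit construction (thread $A_2$ through an initial segment of $B_1$ as $u_1,y_1,\ldots,u_p,y_p$, cross the biclique, then thread $B_2$ through $A_1$, with the cardinality bookkeeping $|A_3|=|B_3|$ adjusted by one in case (ii)), and the case-(i) necessity is the same Chv\'atal argument with $S=N(u_g)$ isolating $u_1,\ldots,u_g$ by nesting. Where you genuinely diverge is the strict inequality $d_G(v_h)>h$ in case (ii), and you are right that this is the delicate step: the plain bound $c(G-N(v_h))\le |N(v_h)|+1$ yields nothing there. The paper resolves it by staying inside the standard Chv\'atal framework but choosing a larger separator, $S=B_1\cup\{v_{r+1},\ldots,v_q\}$ of size $j+q-r$; removing it isolates all of $A_2$ and all of $A_1\setminus N(v_r)$, and the hypothesis $|A|=|B|+1$ turns the component count into $|S|+2$, a contradiction. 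You instead keep the small separator $S=N(v_h)$ and sharpen the bound itself: since $|A|=|B|+1$ forces both endpoints of the path into $A$, every $B$-vertex is internal, and your run-counting argument correctly shows that at most $|S|-R\le |S|-1$ vertices of $B$ can have both path-neighbours deleted, whence $h\le d_G(v_h)-1$. Both routes are valid. The paper's is shorter once one guesses the right separator; yours is more local (no global cardinality bookkeeping beyond the parity of the endpoints) and the strengthened path condition --- at most $|S|-1$ vertices of the minority side can be isolated by deleting $S$ from the majority side --- is a reusable observation that would also streamline the necessity proof of Theorem~\ref{thm2}. In a final write-up you should state explicitly that the relabelling of $B_1$ (resp.\ $A_1$) into initial segments is simultaneous over all $u_g$ (resp.\ $v_h$) thanks to Lemma~\ref{lem2} and Theorem~\ref{thm1}, and record the one-line computation $|A_3|=|A|-(p+q+1)=|B|-(p+q)=|B_3|$ for case (ii) rather than citing the Theorem~\ref{thm2} count verbatim.
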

\begin{proof}
{\em Necessity:} (i) Assume to the contrary that $ \exists u_g \in A_{2}$ such that $u_g$ is the first vertex in the ordering such that $d_G(u_g)<g$. Let $d_{G}(u_g)= s$. From Theorem \ref{thm1}, we know that $N(u_1) \subseteq N(u_2) \subseteq \ldots \subseteq N(u_{g-1}) \subseteq N(u_g)$.  Note that, as per the ordering of $A_2$, $N(u_{g-1})=N(u_g)$.  This implies that $c(G-N(u_g)) =  s+1+1 = s+2$.  Clearly, $s+2 \not \leq s+1$.  Thus, we contradict Chvatal's necessary condition for hamiltonian path.
Similarly $B_{2}$ has an ordering such that $ \forall v_h, d_G(v_h) \geq h$, $1 \leq h \leq q$. \\
(ii) For $A_2$, the argument is similar to the above.   Suppose ${\exists}v_{r}{\in}B_{2}$ such that $v_{r}$ is the first vertex in the ordering such that $d_G(v_{r}){\leq}r$. From Theorem \ref{thm1}, $N(v_{1}){\subseteq}N(v_{2}){\subseteq}\ldots{\subseteq}N(v_{r-1}){\subseteq}N(v_{r})$. Consider the set $S=B_{1}{\cup}\{v_{r+1},v_{r+2},\ldots,v_{q}\}$ and $|S|=j+q-r-1+1=j+q-r$.  Further, $c(G-S) {\geq} p+1+i-r$.  Note that $|A|=i+p$ and $|B|=j+q$.  Since $|A|=|B|+1$, $c(G-S) {\geq} p+1+i-r = j+q+1+1-r=j+q-r+2$.  Clearly, $c(G-S) \not \leq |S|+1$, contradicting Chvatal's condition for hamiltonian path. \\
{\em Sufficiency:} (i) Let $N(u_g)=\{y_{1},\ldots,y_{l}\}$, $g{\leq}l<j$ and $N(v_h)=\{x_{1},\ldots,x_{l}\},$ $h{\leq}l<i$. Consider $A_{3}=A_{1}{\setminus}\{x_{1},x_{2},\ldots,x_{q}\}=\{x_{q+1},x_{q+2},\ldots,x_{i-1},x_{i}\}$.  $B_{3}=B_{1}{\setminus}\{y_{1},y_{2},\ldots,y_{p}\}=\{y_{p+1},x_{p+2},\ldots,y_{j-1},y_{j}\}$.   Note that $|A_3|=|A|-(p+q)$ and $|B_3|=|B|-(p+q)$.  In $G$, \\$P(u_{1},y_{1},u_{2},y_{2},\ldots,u_{p},y_{p},x_{q+1},y_{p+1},x_{q+2},y_{p+2},\ldots,x_{i},y_{j},x_{q},v_{q},\ldots,x_{1},v_{1})$ is a hamiltonian path.  \\
(ii) Consider $A_{3}=A_{1}{\setminus}\{x_{1},x_{2},\ldots,x_{q+1}\}=\{x_{q+2},x_{q+3},\ldots,x_{i-1},x_{i}\}$ and 
    $B_{3}=B_{1}{\setminus}\{y_{1},y_{2},\ldots,y_{p}\}=\{y_{p+1},x_{p+2},\ldots,y_{j-1},y_{j}\}$.
In $G$, $P(u_{1},y_{1},u_{2},y_{2},\ldots,u_{p},y_{p},x_{q+2},y_{p+1},x_{q+3},y_{p+2},\ldots,x_{i},y_{j},x_{q+1},v_{q},\ldots,x_{2},v_{1},x_{1})$ is a hamiltonian path.  This completes a proof of this claim.
\end{proof}
\begin{theorem}
Let $G$ be a $P_5$-free chordal bipartite graph.  Finding hamiltonian path and cycle in $G$ are polynomial-time solvable.
\end{theorem}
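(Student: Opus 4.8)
The plan is to read off a polynomial-time algorithm directly from the characterizations in Theorems \ref{thm2} and \ref{thm3}, since each of those is an ``if and only if'' phrased entirely in terms of the partition $A=A_1\cup A_2$, $B=B_1\cup B_2$ and the degrees of the vertices in $A_2$ and $B_2$. The algorithm I would present has four stages: first compute a maximum biclique $(A_1,B_1)$ and thereby the sets $A_2,B_2$; second, sort $A_2$ and $B_2$ in non-decreasing order of degree, which by Theorem \ref{thm1} yields a valid NNO; third, test the inequalities of Theorem \ref{thm2} for the cycle problem and the two cases of Theorem \ref{thm3} for the path problem; and fourth, when a test succeeds, emit the explicit vertex sequence exhibited in the corresponding sufficiency proof. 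Correctness is inherited from those theorems, so the entire task reduces to bounding the running time of each stage.

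Three of the four stages are immediate: all degrees are obtained in $O(|V(G)|+|E(G)|)$ time, the two sorts cost $O(|V(G)|\log|V(G)|)$, each degree test is a single linear scan of the ordered list, and writing out and validating the stated cycle or path is linear in the number of edges it traverses. The one stage that needs an argument is the computation of the maximum biclique, and I expect this to be the main obstacle. The key observation that dissolves it is the degree gap forced by Lemma \ref{lem2}: every $x\in A_2$ satisfies $N_G(x)\subset B_1$ and hence $d_G(x)\le |B_1|-1$, whereas every $x\in A_1$ is adjacent to all of $B_1$ and hence $d_G(x)\ge |B_1|$, with the symmetric statement holding for $B$. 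Consequently the vertices of $A_1$ are precisely the $|A_1|$ largest-degree vertices of $A$, so the partition is determined once the single threshold $|A_1|$ is fixed.

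This reduces stage one to a search over the threshold. For each candidate value $t$ from $1$ to $|A|$, I would take $A_1^{(t)}$ to be the $t$ highest-degree vertices of $A$, set $B_1^{(t)}=\bigcap_{x\in A_1^{(t)}}N_G(x)$, and test whether $(A_1^{(t)},B_1^{(t)})$ is a maximal biclique in the sense of Lemma \ref{lem1}, i.e. no outside vertex is adjacent to the whole opposite side; among the candidates passing the test I would keep one minimizing $\bigl||A_1^{(t)}|-|B_1^{(t)}|\bigr|$. The degree gap guarantees that the genuine maximum biclique has its $A$-side equal to some top-$t$ set and hence appears among the candidates, and since it attains the global minimum of $|i-j|$ over all maximal bicliques, the kept candidate is itself a maximum biclique. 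Each candidate is checked in polynomial time and there are only $|A|$ of them, so stage one, and therefore the whole procedure, runs in polynomial time; this establishes the theorem.
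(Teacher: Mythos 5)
Your proposal is correct and its skeleton is exactly the paper's: the paper's entire proof is the single remark that the characterizations in Theorems \ref{thm2} and \ref{thm3} are constructive, and your stages two through four are just that remark made explicit (sort by degree to get the NNO via Theorem \ref{thm1}, test the degree inequalities, emit the exhibited vertex sequence). Where you genuinely go beyond the paper is stage one. The paper is silent on how to find the maximum biclique $(A_1,B_1)$, and this is not a vacuous omission, since ``maximum biclique''-type problems are hard on general bipartite graphs and the paper's own definition (a maximal biclique minimizing $|i-j|$) does not come with an obvious algorithm. Your degree-gap observation --- every $x\in A_1$ sees all of $B_1$ so $d_G(x)\ge|B_1|$, while Lemmas \ref{lem1} and \ref{lem2} force every $x\in A_2$ to have $d_G(x)\le|B_1|-1$ --- cleanly identifies $A_1$ as a prefix of the degree ordering with no tie straddling the threshold, and the $O(|A|)$-candidate sweep with the common-neighborhood construction and maximality check is sound: the true maximum biclique occurs among the candidates, so the surviving candidate of minimum $|i-j|$ is itself a maximum biclique. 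In short, your proof is a strict refinement of the paper's: same route, but it actually closes the one algorithmic step the paper takes for granted.
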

\begin{proof}
Follows from  the characterizations presented in Theorems \ref{thm2} and \ref{thm3} as proofs are constructive.
\end{proof}
\section{Longest paths in $P_{5}$-free chordal bipartite graphs}
For a connected graph $G$, the longest path is an induced path of maximum length in $G$.  Since hamiltonian path is a path of maximum length, finding a longest path is trivially solvable if the input instance is an yes instance of hamiltonian path problem.  Thus, the longest path problem is a generalization of hamiltonian path problem, and hence the longest path problem is NP-complete if hamiltonian path problem is NP-complete in the graph class under study.  On the other hand, it is interesting to investigate the complexity of longest path problem in graphs where the hamiltonian path problem is polynomial-time solvable.   Since, hamiltonian path problem in $P_{5}$-free chordal bipartite graphs is polynomial-time solvable, in this section, we shall investigate the complexity of the longest path problem in $P_{5}$-free chordal bipartite graphs.  \\
{\bf Pruning:} We shall now prune $G$ by removing vertices that will not be part of any longest path in $G$.  Without loss of generality, we assume that $G$ has no hamiltonian path, and hence, there must exist vertices in $A_2$ ($B_2$) that violate degree conditions mentioned in Theorem \ref{thm3}.  As part of pruning, we prune such vertices from $G$.   Recall that $A_2=(u_{1},u_{2},\ldots,u_{p})$.  Let  $u_r$ be the first vertex in $A_2$ with $d_G(u_{r})<r$.  Remove $u_r$ and relabel the vertices of $A_2$ so that the sequence is reduced to $(u_{1},u_{2},\ldots,u_{p-1})$.  After, say $c$ iterations, $A_2$ becomes $(u_{1},u_{2},\ldots,u_{p-c})$ such that for $\forall u_g, 1 \leq g \leq (p-c), d_G(u_g) \geq g$.  Similarly, after $d$ iterations, $B_2$ becomes $(v_{1},v_{2},\ldots,v_{q-d})$ such that for $\forall v_h, 1 \leq h \leq (q-d), d_G(v_h) \geq h$.  From now on, when we refer to $A_2$ ($B_2$), it refers to the modified $A_2$ ($B_2$).  We define a subgraph $H_1$ on the modified $A_2$ and it is induced on the set $V(H_1)=A_2 \cup N(A_2)$. \\
Note that if $d_G(u_{p-c}) > p-c$, then we get a path $P_1=P(s_{a}=y_{1},u_{1},y_{2},u_{2},\ldots,y_{p-c},u_{p-c},y_{(p-c)+1}=t_{a})$ in $H_1$ and $|P_1| = 2(p-c)+1$.  If $d_G(u_{p-c}) = p-c$, then we get a path $P_1=P(s_{a}=u_{1},y_{1},u_{2},y_{2},\ldots,u_{p-c},y_{p-c}=t_{a})$ in $H_1$ and $|P_1| = 2(p-c)$.  \\ \\
{\bf Claim:} $P_1$ is a longest path in $H_1$. \\
Observe that if $d_G(u_{p-c}) > p-c$, then the subgraph of $H_1$ induced on the set $A_2 \cup \{y_{1},\ldots,y_{(p-c)+1} \}$ is an yes instance of the hamiltonian path problem.  Similarly, if $d_G(u_{p-c}) = p-c$, then the subgraph of $H_1$ induced on the set $A_2 \cup \{y_{1},\ldots,y_{(p-c)} \}$ is an yes instance of the hamiltonian path problem.  Since $H_1$ respects NNO,  for all the pruned vertices of $A_2$, their neighborhood is a subset of $\{y_{1},\ldots,y_{(p-c)+1} \}$.  This shows that the pruned vertices of $A_2$ cannot be augmented to $P_1$ to get a longer path in $H_1$.  This proves that $P_1$ is a longest path in $H_1$. \\ \\
We define a subgraph $H_2$ on the modified $B_2$ and it is induced on the set $V(H_2)=B_2 \cup N(B_2)$. \\
Note that if $d_G(v_{q-d}) > q-d$, then we get a path $P_2=P(s_{b}=x_{1},v_{1},x_{2},v_{2},\ldots,x_{q-d},v_{q-d},x_{(q-d)+1}=t_{b})$ in $H_2$ and $|P_2| = 2(q-d)+1$.  If $d_G(v_{q-d}) = q-d$, then we get a path $P_2=P(s_{b}=v_{1},x_{1},v_{2},x_{2},\ldots,v_{q-d},x_{q-d}=t_{b})$ in $H_2$ and $|P_2| = 2(q-d)$.  \\ \\
{\bf Claim:} $P_2$ is a longest path in $H_2$. \\
Observe that if $d_G(v_{q-d}) > q-d$, then the subgraph of $H_2$ induced on the set $B_2 \cup \{x_{1},\ldots,x_{(q-d)+1} \}$ is an yes instance of the hamiltonian path problem.  Similarly, if $d_G(v_{q-d}) = q-d$, then the subgraph of $H_1$ induced on the set $B_2 \cup \{x_{1},\ldots,x_{(q-d)} \}$ is an yes instance of the hamiltonian path problem.  Since $H_2$ respects NNO,  for all the pruned vertices of $B_2$, their neighborhood is a subset of $\{x_{1},\ldots,x_{(q-d)+1} \}$.  This shows that the pruned vertices of $B_2$ cannot be augmented to $P_2$ to get a longer path in $H_2$.  This proves that $P_2$ is a longest path in $H_2$. \\ \\
We define a subgraph $H_3$ induced on the set $V(H_3)=(~A_1 \setminus V(P_2)~) \cup (~B_1 \setminus V(P_1)~)$.  Since the subgraph induced on $(A_1,B_1)$ is a complete bipartite graph, $H_3$ is also a complete bipartite graph.  Let $A_1'=A_1 \setminus V(P_2)$ and $B_1'=(~B_1 \setminus V(P_1)~)$.  Let $P_3$ be a longest path in $H_3$.  We shall construct a longest path $P$ in $G$ using longest paths $P_1=(s_a,\ldots,t_a),P_2=(s_b,\ldots,t_b)$ and $P_3=(s_c,\ldots,t_c)$.  We use $(s_a,\ldots,t_a)$ and $(t_a,\ldots,s_a)$ interchangeably to refer to $P_1$, similarly, for $P_2$ and $P_3$ as well. \\
{\bf Case 1:} $|A_1'|=|B_1'|$.  Assume $s_c \in A_1'$ and $t_c \in B_1'$. \\
Case 1.1: $s_a,t_a \in B_1$.   Then, $P=(s_b,\ldots,t_b,t_c,\ldots,s_c,s_a,\ldots,t_a)$. \\
Case 1.2: $s_a \in A_2, t_a \in B_1$.  Then, $P=(s_a,\ldots,t_a,s_c,\ldots,t_c,t_b,\ldots,s_b)$. \\
{\bf Case 2:} $|A_1'|>|B_1'|$.  Note that $s_c,t_c \in A_1'$. \\
Case 2.1: $s_a,t_a \in B_1$. Then, $P=(s_b,\ldots,t_b,t_a,\ldots,s_a,s_c,\ldots,t_c)$. \\
Case 2.2: $s_a \in A_2, t_a \in B_1$.  Then, $P=(s_b,\ldots,t_b,t_{c-1},\ldots,s_c,t_a,\ldots,s_a)$, where $t_{c-1}$ is the vertex before $t_c$ in the ordering $(s_c,\ldots,t_c)$. \\
{\bf Case 3:} $|B_1'|>|A_1'|$.  Note that $s_c,t_c \in B_1'$ and $t_{c-1}$ is the vertex before $t_c$ in the ordering $(s_c,\ldots,t_c)$. \\
Case 3.1: $s_a,t_a \in B_1$.  Then, $P=(s_b,\ldots,t_b,s_c,\ldots,t_{c-1},s_a,\ldots,t_a)$. \\
Case 3.2: $s_a \in A_2, t_a \in B_1$.  Then, $P=(s_b,\ldots,t_b,s_c,\ldots,t_{c-1},t_a,\ldots,s_a)$. \\ \\
{\bf Claim:} $P$ is a longest path in $G$.  Further, $P$ can be computed in polynomial time.\\
Since $P_1, P_2$ and $P_3$ are longest paths as per above claims, $P$ is a longest path in $G$.  As our proofs are constructive in nature, we obtain $P$ in polynomial time.   \\ \\
{\bf Remark:} As an extension of longest path problem, we naturally obtain a minimum leaf spanning tree of $G$, which is a spanning tree of $G$ with the minimum number of leaves, in polynomial time.  Since $G$ respects NNO, the vertices pruned while constructing $H_1$ and $H_2$, cannot be included as internal vertices of $P$.  We shall now construct a minimum leaf spanning tree $T$ with $P$ as a subtree.  (i) the pruned vertices are augmented to $P$ as leaves to obtain $T$.  (ii) if $|A_1'|>|B_1'|$, then the vertices in $A_1'$ not included in $P_3$ are added to $P$ as leaves to obtain $T$.  Similarly, if $|B_1'|>|A_1'|$, then the vertices in $B_1'$ not included in $P_3$ are added to $P$ as leaves to obtain $T$.   This shows that $T$ has $|V(G) \setminus V(P)| + 2$ leaves. \\ \\
\section{Steiner path in $P_{5}$-free chordal bipartite graphs}
We now generalize the hamiltonian path problem and ask the follwing: given $G, R \subset V(G)$, find a path containing all of $R$ minimizing $V(G) \setminus R$, if exists.  Note that, this {\em constrained path problem} is the hamiltonian path problem when $R=V(G)$.  This has another motivation as well.  For a connected graph $G, R \subset V(G)$, the well-known Steiner tree problem asks for a tree containing all of $R$ minimizing $V(G) \setminus R$.  If we ask for a path instead of a tree in the Steiner tree problem, then it is precisely constrained path problem.  Due to this reason, we refer to this problem as the Steiner path problem.  It is important to highlight that not all input graphs have a solution to the Steiner path problem.   We shall present a constructive proof for the existence of Steiner path by case analysis. \\
{\bf Case 1:} $R \subseteq A_2$. \\
{\bf Claim:} If $R=(u_1,\ldots,u_r) \subseteq A_2$ is such that for all $u_g, 1 \leq g \leq r$, $d_G(g) \geq g$, then there exists a Steiner path in $G$.  Otherwise, no Steiner path exists in $G$. \\
Observe that, the vertices in $R$ respects degree constraints, due to which they respect NNO.  Thus, the graph induced on $R \cup N(R)$ is an yes instance of the hamiltonian path problem.  Therefore, $P=(u_1,y_1,\ldots,y_{r-1},u_r\}$ is a Steiner path.  Since $R$ is an independent set of size $r$, any path containing $R$ must have $r-1$ additional vertices and hence $P$ is a minimum Steiner path. \\ \\
{\bf Case 2:} $R \subseteq A_1$. \\
Let $R=(x_1,\ldots,x_r)$. It is easy to see that if $|R| \leq |B_1|+1$, then $P=(x_1,y_1,\ldots,x_{r-1},y_{r-1},x_r\}$ is a minimum Steiner path.  We shall work with $|R| > |B_1|+1$. \\
{\bf Claim:} If there exists $(z_1,\ldots,z_{r-(j+1)})$ in $B_2$ such that $(z_1,\ldots,z_{r-(j+1)})$ has NNO with $(w_1,\ldots,w_{r-(j+1)})$ of $A_1$, then there exists a Steiner path.  Otherwise, no Steiner path exists in $G$. \\
Note that the path $P=(z_1,w_1,\ldots,z_{r-(j+1)},w_{r-(j+1)},y_1,x_1,\ldots,y_j,x_{j+1})$ is a Steiner path of minimum cardinality. \\\\
{\bf Case 3:} $R \cap A_1 \not= \emptyset$ and $R \cap B_2 \not= \emptyset$. \\
{\bf Claim:} If $R \cap B_2=(z_1,\ldots,z_l)$ has NNO with $(w_1,\ldots,w_l)$ of $A_1$, then there exists a Steiner path in $G$.  Otherwise, no Steiner path exists in $G$. \\
The path $P=(z_1,w_1,\ldots,z_l,w_l,y_1,x_1,\ldots,y_{r-l-1},x_{r-l})$ is a minimum Steiner path in $G$. \\ \\
On the similar line, other cases $R \subseteq B_1$, $R \subseteq B_2$, $R \cap A_1 \not= \emptyset$ and $R \cap A_2 \not= \emptyset$, $R \cap A_2 \not= \emptyset$ and $R \cap B_1 \not= \emptyset$, $R \cap A_1 \not= \emptyset$ and $R \cap B_2 \not= \emptyset$ and  $R \cap A_2 \not= \emptyset$ can be proved. \\ \\
{\bf Conclusions and Further Research:} In this paper, we have presented structural results on $P_5$-free chordal bipartite graphs.  Subsequently, using these results, we have presented polynomial-time algorithms for hamiltonian cycle, hamiltonian path, longest path and Steiner path problems.  These results exploits the nested neighborhood ordering of $P_5$-free chordal bipartite graphs which is an important contribution of this paper.  A natural direction for further research is to study $P_6$-free chordal bipartite graphs and $P_7$-free chordal bipartite graphs as the complexity of these problems are open in these graph classes.

\end{document}